\documentclass[draftclsnofoot,12pt,onecolumn]{IEEEtran}
\usepackage{amsmath, amsthm, amssymb}
\usepackage{epsfig}
\usepackage{latexsym}
\usepackage{graphicx}
\usepackage{cite}
\usepackage{subfigure}

\newtheorem{lemma}{Lemma}
\newtheorem{proposition}{Proposition}

\newtheorem{remark}{Remark}

\title{ Useful  Results for Computing the Nuttall${-}Q$ and Incomplete Toronto Special Functions}

\author{
Paschalis~C.~Sofotasios$^{1}$,~Khuong~Ho-~Van$^{2}$,~Tuan~Dang~Anh$^{2}$ and~Hung~Dinh~Quoc$^{2}$
\\
\begin{normalsize} 
$^{1}$School of Electronic and Electrical Engineering, University of Leeds, LS$2$ $9$JT, Leeds, United Kingdom. 
\end{normalsize}
\\
\begin{normalsize} 
e-mail: $\rm p.sofotasios@\rm leeds.ac.uk$
\end{normalsize} 
\\
\begin{normalsize} 
$^{2}$Department of Telecommunications Engineering, HoChiMinh City University of Technology, HoChiMinh City, Vietnam. 
\end{normalsize}
\\
\begin{normalsize} 
e-mail: $ \rm khuong.hovan@\rm yahoo.ca$, $\rm datuan@\rm hcmut.edu.vn$, $ \rm hungqd76@\rm yahoo.com$ 
\end{normalsize} 
 }


\begin{document}

\maketitle

\begin{abstract}
This work is devoted to the derivation of novel analytic results for special functions which are particularly useful in wireless communication theory. Capitalizing on recently reported series representations for the Nuttall $Q{-}$function  and the incomplete Toronto function,    we derive closed-form upper bounds for the corresponding truncation error of these series as well as closed-form upper bounds that under certain cases become accurate approximations. The derived expressions are tight and their algebraic representation is rather convenient to handle analytically and numerically. Given that the Nuttall${-}Q$ and incomplete Toronto functions are not built-in in popular mathematical software packages, the proposed results are particularly useful in computing these functions when employed in applications relating to natural sciences and engineering, such as wireless communication over fading channels.
\end{abstract}

\section{Introduction}

It is widely known that special functions constitute invaluable
mathematical tools in most fields of natural sciences and engineering. In the area
of telecommunications, their utilization often allows the derivation
of elegant closed-form representations for important performance
measures such as error probability, channel capacity and
higher-order statistics (HOS). The computational realization of such
expressions is typically straightforward since most special
functions are built-in in popular mathematical software packages
such as MAPLE, MATLAB and MATHEMATICA. Among others, the Marcum
$Q{-}$function, $Q_{m}(a,b)$, the Nuttall $Q{-}$function,
$Q_{m,n}(a,b)$ and the incomplete Toronto function (ITF),
$T_{B}(m,n,r)$  were proposed several decades ago and have been
involved in analytic solutions of various scenarios in  
information and communication theory.  More specifically, Marcum $Q{-}$function was
firstly proposed by Marcum in \cite{J:Marcum_1, J:Marcum_2} but it
became widely known in digital communications, over fading or
non-fading media, by the contributions in \cite{J:Nuttall,
B:Proakis, J:Helstrom, J:Shnidman, J:Simon_1, B:Alouini}. Its
traditional and generalized form are denoted as $Q_{1}(a, b)$ and
$Q_{m}(a, b)$, respectively, and their properties and identities
were presented in \cite{J:Nuttall}. Respective upper and lower
bounds were reported in \cite{B:Alouini, J:Karagiannidis, J:Baricz,
C:Tellambura_1} while an exponential integral representation was
proposed in \cite{C:Tellambura_1}. In addition, useful closed-form representation
for the $Q_{m}(a, b)$ function for the special case that the order $m$ is an positive odd
multiple of $0.5$, i.e. $ m \pm \frac{1}{2} \in \mathbb{N} $, were derived in \cite{J:Karagiannidis, C:Kam_1}.

Likewise, the Nuttall $Q{-}$function  is a special function that
emerges from the generalization of the Marcum $Q{-}$function
\cite{J:Nuttall}. Its definition is given by a semi-infinite
integral representation and it can be expressed in terms of
$Q_{m}(a, b)$ and the modified Bessel function of the first kind,
$I_{n} (x)$, for the special case that the sum $m+n$ is an odd
integer \cite{J:Simon_2}. Establishment of monotonicity criteria for
$Q_{m, n} (a, b)$ along with the derivation of tight lower and upper
bounds and a closed-form expression for the case that $m \pm 0.5 \in
\mathbb{N}$ and $n \pm 0.5 \in \mathbb{N} $ were reported in
\cite{J:Karagiannidis}.  In the same context, the incomplete Toronto
function is a similar special function which was proposed by Hatley
in \cite{J:Hatley}.  It  is a generalization of the  Toronto
function, $T(m,n,r)$, and includes the $Q_{m}(a,b)$ function as a
special case. Its definition is also given in non-infinite integral
form while alternative representations include two infinite series
\cite{J:Sagon}. Its application has been used in studies relating to statistics,
signal detection and estimation, radar systems and error probability
analysis \cite{J:Fisher, J:Marcum_3, J:Swerling}.

Nevertheless, in spite of the evident importance of the
$Q_{m,n}(a,b)$ and $T_{B}(m,n,r)$  functions,  they are not adequately
tabulated while they are not included as standard built-in functions
in popular software packages. Motivated by this, the Authors in
\cite{J:Karagiannidis, B:Sofotasios, C:Sofotasios_1, C:Sofotasios_2,
C:Sofotasios_3}  derived explicit expressions for the $Q_{m,n}(a,b)$
and $T_{B}(m, n, r)$ functions. Specifically, a closed-form
expression for $Q_{m,n}(a,b)$ was derived in \cite{J:Karagiannidis}
for the specific case that $m \pm \frac{1}{2} \in \mathbb{N}$ and $m
\pm \frac{1}{2} \in \mathbb{N}$. A similar expression for the
$T_{B}(m,nr)$ was subsequently derived in \cite{B:Sofotasios,
C:Sofotasios_1}. Furthermore, simple series representations for both
$Q_{m,n}(a,b)$ and $T_{B}(m,n,r)$ functions were derived in
\cite{C:Sofotasios_1, C:Sofotasios_2, C:Sofotasios_3}. These series
are particularly useful since their algebraic form is relatively
simple, which render the functions convenient to handle
analytically. However, their form is infinite and thus the
determination of adequate truncation needs to be taken into
consideration for ensuring acceptable levels of accuracy for any
given application.

Motivated by this, this work is devoted to the derivation of novel upper bounds for the truncation error of the above functions. The derived representations are expressed in closed-form and have a relatively simple algebraic form. In addition, simple upper bounds are derived in closed-form, which are shown to be quite tight and in certain cases they become remarkably accurate closed-form approximations.

\section{Novel Results for the Nuttall $Q$-function}

\subsection{Definition, Basic Properties and  Existing  Expressions}

The Nuttall Q-function is defined by the following semi-infinite integral representation \cite[eq. (86)]{J:Nuttall},
 
\begin{equation}  \label{Nuttall_1}
Q_{m, n}(a,b) \triangleq \int_{b}^{\infty} x^{m} e^{-\frac{x^2+a^2}{2}} I_{n} (ax) dx,
\end{equation}

\noindent
and constitutes a generalization of the Marcum $Q$-function,
 
\begin{equation}  \label{Marcum_1}
Q_{m}(a,b) \triangleq \frac{1}{a^{m-1}} \int_{b}^{\infty} x^{m} e^{-\frac{x^2+a^2}{2}} I_{m-1} (ax) dx.
\end{equation}

The normalized Nuttall $Q{-}$function is given by $\mathcal{Q} _{m, n} (a, b) \triangleq Q_{m, n} (a, b){/}a^{n}$, which for $n = 0$ yields $Q_{1, 0}(a, b) = \mathcal{Q} _{1, 0} (a, b) =Q_{1} (a, b) = Q(a,b)$. For the special case that $n = m - 1$ it follows that $\mathcal{Q} _{m, m-1} (a, b) = Q_{m} (a, b)$ and $Q_{m, m - 1} (a, b) = a^{m - 1} Q_{m} (a, b)$. Furthermore, when $m$ and $n$ are integers, the following recursion equation was reported in \cite[eq. (3)]{J:Simon_2},
 
\begin{equation} \label{Nuttall_Recursion}
Q_{m, n} (a, b)  = \frac{b^{m - 1}  I_{n} (ab)}{e^{\frac{a^{2} + b^{2}}{2}}} + a Q_{m-1, n+1} (a, b)      +  (m + n - 1) Q_{m - 2, n} (a, b).
\end{equation}

\noindent
A finite series representation expressed in terms of the $Q_{1}(a,b)$, $I_{n}(x)$ functions, the gamma function, $\Gamma(x)$, and the generalized $k^{th}$ order Laguerre polynomial, $L_{k}^{(n)} (x)$, was proposed in \cite[eq. (8)]{J:Simon_2}. However, this expression is restricted to the case that $m + n$ is an odd positive integer.

As already mentioned in Section I, a simple expressions for the $Q_{m,n}(a,b)$ function were reported recently by \cite{J:Karagiannidis, B:Sofotasios, C:Sofotasios_3}.   This expression is given by,
 
\begin{equation} \label{Nuttall_Polynomial}
\mathcal{Q}_{m,n}(a,b) \simeq \sum_{l = 0}^{p} \frac{a^{2l } \, e^{- \frac{a^{2}}{2}}  \Gamma(p+l) p^{1 - 2l} \, \Gamma \left( \frac{m + n + 2l + 1}{2}, \frac{b^{2}}{2} \right) }{  l! \Gamma(n+l+1) 2^{\frac{n - m + 2l + 1}{2}}\,  \Gamma(p-l+1) },
\end{equation}
where 

\begin{equation}
\Gamma(a) \triangleq \int_{0}^{\infty} t^{a - 1} \rm {exp}(-t) \, dt 
\end{equation}
and

\begin{equation}
\Gamma(a,x) \triangleq \int_{x}^{\infty} t^{a - 1} \rm{exp} (-t) \, dt
\end{equation}
denote the gamma and upper incomplete gamma functions, respectively. As, $p \rightarrow \infty$, the terms $\Gamma(p+l) p^{1 - 2l} / \Gamma(p-l+1)$ vanish and \eqref{Nuttall_Polynomial} reduces to the following exact infinite series representation,
 
\begin{equation} \label{Nuttall_Series}
\mathcal{Q}_{m,n}(a,b) =  \sum_{l = 0}^{\infty} \frac{a^{2l } \, e^{- \frac{a^{2}}{2}}   \, \Gamma \left( \frac{m + n + 2l + 1}{2}, \frac{b^{2}}{2} \right) }{  l! \Gamma(n+l+1) 2^{\frac{n - m + 2l + 1}{2}}\,   },
\end{equation}
For the special case that $m, n \in \mathbb{Z}^{+}$,  the $\Gamma(a,x)$ function can be expressed in terms of the finite series representation in \cite[eq. (8.352.4)]{B:Tables}. Hence, after necessary variable transformation \eqref{Nuttall_Polynomial} can be also expressed as,
 
\begin{equation} \label{Nuttall_Integer_Indices}
\mathcal{Q}_{m,n}(a,b) \simeq \sum_{l = 0}^{p} \sum_{k=0}^{L} \frac{\mathcal{A} \, a^{2l} b^{2k} \Gamma(p+l) p^{1 - 2l} \Gamma(\frac{m + n + 1}{2} + l)}{ l! k! \Gamma(n + l  +1) \Gamma(p-l+1) 2^{ l + k}},
\end{equation}
where,
 
\begin{equation} \label{coefficient_1}
L = \frac{m + n - 1}{2} + l,
\end{equation}
and
 
\begin{equation} \label{coefficient_2}
\mathcal{A} = a^{n}  2^{ \frac{m - n - 1}{2}} e^{- \frac{a^{2} + b^{2}}{2}}.
\end{equation}
From \eqref{Nuttall_Integer_Indices} we can also obtain straightforwardly,
 
\begin{equation} \label{Nuttall_Integer_Indices_inf}
\mathcal{Q}_{m,n}(a,b)= \sum_{l = 0}^{\infty} \sum_{k=0}^{L} \frac{\mathcal{A} \, a^{2l} b^{2k} \Gamma(\frac{m + n + 1}{2} + l)}{ l! k! \Gamma(n + l  +1)  2^{ l + k}},
\end{equation}
It is noted here that the above representations have a convenient algebraic representation and are not restricted since they hold for all values of the involved parameters. However, an efficient way that will determine effectively after how many terms the above series can be truncated without loss of accuracy, is undoubtedly necessary.

\subsection{A Closed-Form Upper Bound for the Truncation Error}

The accuracy of \eqref{Nuttall_Polynomial} is proportional to the value of $p$ and the series converges quickly. However, deriving a convenient closed-form expression that is capable of determining  the involved truncation error analytically is particularly  advantageous.

\begin{lemma}
For $m, n, a \in \mathbb{R}$ and $b \in \mathbb{R}^{+}$, the following closed-form upper bound for the truncation error holds,
 
\begin{equation*}
\epsilon_{t}  \leq \sum_{k = 0}^{\lceil n \rceil_{0.5} - 1}\frac{ (-1)^{\lceil n \rceil_{0.5}}  \Gamma(2\lceil n \rceil_{0.5} - k - 1) \mathcal{I}_{\lceil m\rceil_{0.5},\lceil n \rceil_{0.5}}^{k} (a, b) }{k! \Gamma(\lceil n \rceil_{0.5} - k) (2a)^{-k} \sqrt{\pi} 2^{ \lceil n \rceil_{0.5} - \frac{1}{2}} a^{2\lceil n \rceil_{0.5} - 1} }
\end{equation*}
\begin{equation} \label{Nuttall_Truncation}
 - \sum_{l = 0}^{p} \frac{a^{2l } \, e^{- \frac{a^{2}}{2}}  \Gamma(p+l) p^{1 - 2l} \, \Gamma \left( \frac{m + n + 2l + 1}{2}, \frac{b^{2}}{2} \right) }{  l! \Gamma(n+l+1) 2^{\frac{n - m + 2l + 1}{2}}\,  \Gamma(p-l+1) }, \,\,  \quad  \, \,
\end{equation}

\noindent
where,
 
\begin{equation} \label{Nuttall_Coefficient}
\begin{split}
\mathcal{I}&^{k}_{m, n} (a,b) = \sum_{l = 0}^{m - n + k}  \binom{m - n + k}{l} \frac{(-1)^{k} 2^{\frac{l - 1}{2}} }{a^{n + l - m - k}} \times \\
&\left\lbrace (-1)^{m - n - l - 1} \Gamma \left( \frac{l + 1}{2}, \frac{(b + a)^{2}}{2} \right) + \Gamma\left( \frac{l + 1}{2} \right) \right. \\
& - \left. [{\rm sgn}(b - a)]^{l + 1} \gamma \left( \frac{l + 1}{2}, \frac{(b - a)^{2}}{2} \right)   \right\rbrace,
\end{split}
\end{equation}

\noindent
with 

\begin{equation}
\gamma(a,x) \triangleq \int_{0}^{x} t^{a - 1} \rm{exp} (-t) \, dt  
\end{equation}
denoting the lower incomplete gamma function.
\end{lemma}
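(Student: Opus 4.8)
By definition the truncation error is $\epsilon_{t}=\mathcal{Q}_{m,n}(a,b)-\sum_{l=0}^{p}(\cdots)$, the subtracted sum being exactly the $p$-term approximant of \eqref{Nuttall_Polynomial}. Since that same sum appears with a minus sign on the right-hand side of \eqref{Nuttall_Truncation}, the asserted inequality is equivalent to the single order-independent bound $\mathcal{Q}_{m,n}(a,b)\leq U$, where $U$ is the first ($k$-indexed) sum of the stated bound. The plan is to prove this last inequality in two steps: (i) show that $U$ is precisely the closed-form value of $\mathcal{Q}_{m,n}(a,b)$ at the half-odd-integer orders $\lceil m\rceil_{0.5}$, $\lceil n\rceil_{0.5}$; and (ii) dominate $\mathcal{Q}_{m,n}(a,b)$ by that value using monotonicity in the order parameters.

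For step (i) I would take the integral representation \eqref{Nuttall_1} with $m$, $n$ replaced by $\lceil m\rceil_{0.5}$, $\lceil n\rceil_{0.5}$ and insert the closed elementary-function form of $I_{\lceil n\rceil_{0.5}}(ax)$ available for half-odd-integer order, namely a finite linear combination of terms $(ax)^{-k-1/2}e^{\pm ax}$ for $k=0,\dots,\lceil n\rceil_{0.5}-1$ with coefficients built from $\Gamma(2\lceil n\rceil_{0.5}-k-1)/[k!\,\Gamma(\lceil n\rceil_{0.5}-k)]$ together with the appropriate powers of $2$ and $a$; this is what produces the outer $k$-sum and the prefactor $(-1)^{\lceil n\rceil_{0.5}}/[\sqrt{\pi}\,2^{\lceil n\rceil_{0.5}-1/2}\,a^{2\lceil n\rceil_{0.5}-1}]$ in \eqref{Nuttall_Truncation}. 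What remains inside each $k$-term is an integral of the form $\int_{b}^{\infty}x^{\lceil m\rceil_{0.5}-k-1/2}e^{-(x^{2}+a^{2})/2}e^{\pm ax}\,dx$; completing the square, $x^{2}+a^{2}\mp 2ax=(x\mp a)^{2}$, and shifting the variable of integration turns each of these into a binomial expansion of upper and lower incomplete gamma functions of arguments $(b+a)^{2}/2$ and $(b-a)^{2}/2$. Collecting these contributions and keeping track of signs produces exactly $\mathcal{I}^{k}_{\lceil m\rceil_{0.5},\lceil n\rceil_{0.5}}(a,b)$ as written in \eqref{Nuttall_Coefficient}, the factor $[\mathrm{sgn}(b-a)]^{l+1}$ coming from the sign of $b-a$ in the transformed integration limits; the outcome can be cross-checked against the half-integer-order closed forms already reported in \cite{J:Karagiannidis, B:Sofotasios, C:Sofotasios_1}.

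For step (ii) I would invoke the monotonicity criteria for the normalized Nuttall $Q$-function with respect to its order parameters established in \cite{J:Karagiannidis}, together with $\lceil m\rceil_{0.5}\geq m$ and $\lceil n\rceil_{0.5}\geq n$, to conclude $\mathcal{Q}_{m,n}(a,b)\leq\mathcal{Q}_{\lceil m\rceil_{0.5},\lceil n\rceil_{0.5}}(a,b)=U$, which by step (i) and the reduction above establishes \eqref{Nuttall_Truncation}.

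I expect step (ii) to be the main obstacle, specifically for the full range $b\in\mathbb{R}^{+}$: at the level of the integrand the replacement $x^{m}\mapsto x^{\lceil m\rceil_{0.5}}$ is monotone only when $x\geq b\geq 1$, and for $b\in(0,1)$ the $x^{m}$-factor, the order-dependence of $I_{\nu}(ax)$, and the normalizing power $a^{-n}$ act in competing directions, so the inequality cannot be obtained pointwise and must rest on the global monotonicity statements of \cite{J:Karagiannidis}. The other, purely laborious, difficulty is the sign and index bookkeeping in step (i): a single misplaced power of $2$ or $a$, a stray $(-1)$, or a mishandled $\mathrm{sgn}(b-a)$ would break the exact match with \eqref{Nuttall_Coefficient}.
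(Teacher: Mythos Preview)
Your proposal is correct and follows essentially the same route as the paper: reduce the truncation error to $\mathcal{Q}_{m,n}(a,b)$ minus the finite sum, then apply the monotonicity inequality $\mathcal{Q}_{m,n}(a,b)\le \mathcal{Q}_{\lceil m\rceil_{0.5},\lceil n\rceil_{0.5}}(a,b)$ from \cite{J:Karagiannidis} and the half-odd-integer closed form. The only difference is cosmetic: the paper simply cites \cite[eq.~(19)]{J:Karagiannidis} for the monotonicity and \cite[Corollary~1]{J:Karagiannidis} for the closed form of $U$, whereas your step~(i) sketches the underlying derivation (expanding $I_{\lceil n\rceil_{0.5}}$ in elementary functions and completing the square), and your concern about the $b\in(0,1)$ regime in step~(ii) is precisely why the paper, like you, defers to the global monotonicity result rather than arguing pointwise.
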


\begin{proof}
The truncation error of \eqref{Nuttall_Polynomial} is expressed as,
 
\begin{equation} \label{Nuttall_Truncation_1}
\epsilon_{t} =  \sum_{l = p+1}^{\infty} \frac{a^{2l } \, e^{- \frac{a^{2}}{2}}  \Gamma(p+l) p^{1 - 2l} \, \Gamma \left( \frac{m + n + 2l + 1}{2}, \frac{b^{2}}{2} \right) }{  l! \Gamma(n+l+1) 2^{\frac{n - m + 2l + 1}{2}}\,  \Gamma(p-l+1) },
\end{equation}

\noindent
which can be equivalently expressed as
 
\begin{equation} \label{Tr_e_1}
\begin{split}
\epsilon_{t} &= \underbrace{\sum_{l = 0}^{\infty} \frac{a^{2l } \, e^{- \frac{a^{2}}{2}}  \Gamma(p+l) p^{1 - 2l} \, \Gamma \left( \frac{m + n + 2l + 1}{2}, \frac{b^{2}}{2} \right) }{  l! \Gamma(n+l+1) 2^{\frac{n - m + 2l + 1}{2}}\,  \Gamma(p-l+1) } }_{\mathcal{I}_{1}}\\
& - \sum_{l = 0}^{p} \frac{a^{2l } \, e^{- \frac{a^{2}}{2}}  \Gamma(p+l) p^{1 - 2l} \, \Gamma \left( \frac{m + n + 2l + 1}{2}, \frac{b^{2}}{2} \right) }{  l! \Gamma(n+l+1) 2^{\frac{n - m + 2l + 1}{2}}\,  \Gamma(p-l+1) }.
\end{split}
\end{equation}

\noindent
As already mentioned, since the series in $\mathcal{I}_{1}$ tends to infinity, the terms  $\Gamma(p + l) p^{1 - 2l}{/} \Gamma(p - l + 1)$ vanish, yielding
 
\begin{equation} \label{Tr_e_2}
\mathcal{I}_{1} = \mathcal{Q}_{m, n}(a, b) = \sum_{l = 0}^{\infty} \frac{a^{2l } \, e^{- \frac{a^{2}}{2}}  \, \Gamma \left( \frac{m + n + 2l + 1}{2}, \frac{b^{2}}{2} \right) }{  l! \Gamma(n+l+1) 2^{\frac{n - m + 2l + 1}{2}} }.
\end{equation}

\noindent
It is recalled that the normalized Nuttall $Q$-function can be upper bounded by \cite[eq. (19)]{J:Karagiannidis}, namely,
 
\begin{equation} \label{Nuttall_UB}
\mathcal{Q}_{m, n} (a, b) \leq \mathcal{Q}_{\lceil m\rceil_{0.5}, \lceil n \rceil_{0.5}} (a, b),
\end{equation}

\noindent
where $\lceil n \rceil_{0.5} \triangleq  \lceil n - 0.5 \rceil + 0.5$, with $\lceil . \rceil$ denoting the integer ceiling function. To this effect, by substituting \eqref{Nuttall_UB} into \eqref{Tr_e_2} and then into \eqref{Tr_e_1}, one obtains,
 
\begin{equation} \label{Tr_e_3}
\epsilon_{t}  \leq \mathcal{Q}_{\lceil m\rceil_{0.5}, \lceil n \rceil_{0.5}} (a, b)  - \sum_{l = 0}^{p} \frac{a^{2l } \, e^{- \frac{a^{2}}{2}}  \Gamma(p+l) p^{1 - 2l} \, \Gamma \left( \frac{m + n + 2l + 1}{2},\frac{b^{2}}{2} \right) }{  l! \Gamma(n+l+1) 2^{\frac{n - m + 2l + 1}{2}}\,  \Gamma(p-l+1) }.
\end{equation}

\noindent
Importantly, the upper bound in \eqref{Nuttall_UB} can be expressed in closed-form according to the expression in \cite[Corollary $1$]{J:Karagiannidis}. Therefore, after substitution in \eqref{Tr_e_3} equation \eqref{Nuttall_Truncation} is deduced thus, completing the proof\footnote{By setting $n = m - 1$ in \eqref{Nuttall_Polynomial}, \eqref{Nuttall_Integer_Indices}, and \eqref{Nuttall_Truncation}, similar expressions are deduced for the Marcum $Q$-function, $Q_{m}(a,b)$.}. 
\end{proof}

\begin{remark}

By following the same methodology as in Lemma $1$, a similar upper bound can be deduced for the truncation error of the infinite series of $\mathcal{Q}_{m,n}(a,b)$  in \eqref{Nuttall_Integer_Indices_inf}.

\end{remark}

\subsection{A Tight Upper Bound and Approximation}

Besides the expressions for the special cases that $m + n$ is an odd positive integer and $m \pm 0.5 \in \mathbb{N}, n \pm 0.5 \in \mathbb{N}$, no simple representations exist for the Nuttall $Q$-function.

\begin{proposition}
For $a,b,m,n \in \mathbb{R}^{+}$ and for the special cases that $b \rightarrow 0 \mid a, m, n \geq \frac{3}{2} b$, the following closed-form upper bound for the normalized Nuttall $Q$-function is valid,
 
\begin{equation} \label{Nuttall_Novel_Bound}
\mathcal{Q}_{m,n}(a,b) \leq \frac{ \Gamma\left( \frac{m + n + 1}{2} \right) \,_{1}F_{1} \left( \frac{m + n + 1}{2}, n + 1, \frac{a^{2}}{2} \right)  }{n!\,2^{\frac{n - m + 1}{2}}\,e^{\frac{a^{2}}{2}} }.
\end{equation}

\noindent
where $\,_{1}F_{1}(a, b, x) $ denotes the Kummer confluent hypergeometric function \cite{B:Tables, B:Abramowitz}.

\end{proposition}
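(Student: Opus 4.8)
The plan is to work directly from the exact infinite series in \eqref{Nuttall_Series}, replace each upper incomplete gamma function appearing there by the corresponding complete gamma function, and then recognize the resulting power series in $a^{2}$ as a Kummer confluent hypergeometric function. The only analytic ingredient is the elementary bound $\Gamma(s,x)\leq\Gamma(s)$, valid for every $s>0$ and $x\geq 0$, which holds with equality in the limit $x\to 0$.

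First I would note that in \eqref{Nuttall_Series} the order of each incomplete gamma is $s=\frac{m+n+2l+1}{2}\geq\frac{m+n+1}{2}>0$ since $m,n\in\mathbb{R}^{+}$, so that $\Gamma\!\left(\frac{m+n+2l+1}{2},\frac{b^{2}}{2}\right)=\int_{b^{2}/2}^{\infty}t^{s-1}e^{-t}\,dt\leq\int_{0}^{\infty}t^{s-1}e^{-t}\,dt=\Gamma\!\left(\frac{m+n+2l+1}{2}\right)$. Applying this term by term to \eqref{Nuttall_Series} — legitimate because all summands are nonnegative and, as will be shown, the majorant series sums to an entire function of $a^{2}$ — yields
\begin{equation*}
\mathcal{Q}_{m,n}(a,b)\leq\sum_{l=0}^{\infty}\frac{a^{2l}\,e^{-a^{2}/2}\,\Gamma\!\left(\tfrac{m+n+2l+1}{2}\right)}{l!\,\Gamma(n+l+1)\,2^{\frac{n-m+2l+1}{2}}}.
\end{equation*}

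Next I would pull the $l$-independent factors $e^{-a^{2}/2}$ and $2^{-\frac{n-m+1}{2}}$ out of the sum, using $2^{\frac{n-m+2l+1}{2}}=2^{\frac{n-m+1}{2}}\,2^{l}$, so that the remaining series is $\sum_{l\geq 0}\frac{(a^{2}/2)^{l}}{l!}\,\frac{\Gamma\!\left(\frac{m+n+1}{2}+l\right)}{\Gamma(n+1+l)}$. Invoking the Pochhammer identities $\Gamma\!\left(\frac{m+n+1}{2}+l\right)=\Gamma\!\left(\frac{m+n+1}{2}\right)\left(\frac{m+n+1}{2}\right)_{l}$ and $\Gamma(n+1+l)=\Gamma(n+1)(n+1)_{l}$, this equals $\frac{\Gamma\left(\frac{m+n+1}{2}\right)}{\Gamma(n+1)}\sum_{l\geq 0}\frac{\left(\frac{m+n+1}{2}\right)_{l}}{(n+1)_{l}}\frac{(a^{2}/2)^{l}}{l!}$, and the last series is by definition $\,_{1}F_{1}\!\left(\frac{m+n+1}{2};\,n+1;\,\frac{a^{2}}{2}\right)$, which is entire. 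Collecting constants, writing $\Gamma(n+1)=n!$ and $e^{-a^{2}/2}=1/e^{a^{2}/2}$, reproduces exactly \eqref{Nuttall_Novel_Bound}.

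Finally I would address the sharpness statement. Since the sole inequality used is $\Gamma(s,b^{2}/2)\leq\Gamma(s)$ in each summand, the defect is $\sum_{l}\frac{a^{2l}e^{-a^{2}/2}}{l!\Gamma(n+l+1)2^{\frac{n-m+2l+1}{2}}}\int_{0}^{b^{2}/2}t^{\frac{m+n+2l-1}{2}}e^{-t}\,dt$; this clearly vanishes as $b\to 0$, giving equality in the limit, and it remains negligible whenever $b^{2}/2$ is small relative to the order $\frac{m+n+2l+1}{2}$ of every incomplete gamma, i.e.\ in the regime $a,m,n\gtrsim\frac{3}{2}b$, which explains why the bound then becomes an accurate approximation. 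I do not anticipate a real obstacle; the only care required is the term-by-term comparison of the two nonnegative series, the bookkeeping of the powers of $2$ and the Pochhammer symbols, and the observation that $\,_{1}F_{1}$ is entire so that all manipulations hold for every $a\in\mathbb{R}^{+}$.
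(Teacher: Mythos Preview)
Your proof is correct and follows essentially the same route as the paper: start from the exact series \eqref{Nuttall_Series}, bound each $\Gamma(s,b^{2}/2)$ by $\Gamma(s)$, rewrite the resulting series via Pochhammer symbols, and identify it as $\,_{1}F_{1}$. Your treatment is in fact slightly more careful than the paper's, since you explicitly justify the term-by-term comparison and address the sharpness as $b\to 0$.
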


\begin{proof}
According to Remark $2$, equation \eqref{Nuttall_Polynomial} reduces to the exact infinite series as $p \rightarrow \infty$. By also recalling that $\int_{0}^{\infty} f(x) dx \geq \int_{a}^{\infty} f(x) dx$ when $a \in \mathbb{R}^{+}$, it follows that $\Gamma(a,x) \leq \Gamma(x)$.  As a result, the $\mathcal{Q}_{m,n}(a,b)$ function can be straightforwardly upper bounded as follows:
 
\begin{equation} \label{Nuttall_Approximation_1}
\mathcal{Q}_{m,n}(a,b)  \leq   \underbrace{\sum_{l = 0}^{\infty} \frac{a^{2l } \, e^{- \frac{a^{2}}{2}}  \, \Gamma \left( \frac{m + n + 2l + 1}{2} \right) }{  l! \Gamma(n+l+1) 2^{\frac{n - m + 2l + 1}{2}} }}_{\mathcal{I}_{2}}.
\end{equation}

\noindent
By recalling that the Pochhammer symbol is defined as $(a)_{n} \triangleq \Gamma(a+n){/}\Gamma(a)$ and expressing each gamma function as 

\begin{equation}
\Gamma(x+l) = (x)_{l} \Gamma(x)
\end{equation}
one obtains,
 
\begin{equation} \label{Nuttall_Approximation_2}
\mathcal{I}_{2} = \frac{\Gamma\left( \frac{m + n + 1}{2} \right) e^{\frac{-a^{2}}{2}}  }{n! 2^{\frac{n - m + 1}{2}} } \sum_{l = 0}^{\infty} \frac{\left( \frac{m + n + 1}{2} \right)_{l} a^{2l}  }{ \, l!\,  (n + 1)_{l} 2^{l} }.
\end{equation}

\noindent
The above infinite series can be expressed in terms of the Kummer's confluent hypergeometric function 

 \begin{equation}
 \,_{1}F_{1} (a,b,x) \triangleq \sum_{i = 0}^{\infty} \frac{(a)_{i}}{(b)_{i}} \frac{x^{i}}{i!}
 \end{equation}
  
    Hence, by performing the necessary change of variables and substituting \eqref{Nuttall_Approximation_2} into \eqref{Nuttall_Approximation_1}, equation \eqref{Nuttall_Novel_Bound} is deduced and thus the proof is completed.
\end{proof}

\begin{figure}[h!]
\centering
\includegraphics[ width=12cm,height=9cm]{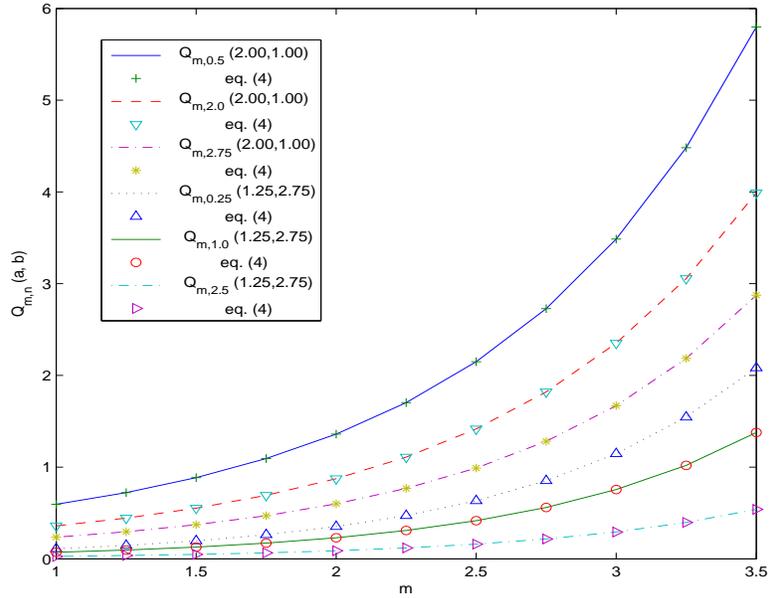}
\caption{ $Q_{m,n}(a,b)$ in \eqref{Nuttall_Polynomial}.}
\end{figure}

\begin{remark}
When $a, m, n \geq \frac{5}{2} b$, the upper bound in \eqref{Nuttall_Novel_Bound} becomes an accurate closed-form approximation for $\mathcal{Q}_{m,n}(a,b)$, namely, 
 
\begin{equation} \label{Nuttall_Novel_Approx}
\mathcal{Q}_{m,n}(a,b) \simeq\frac{ \Gamma\left( \frac{m + n + 1}{2} \right) \,_{1}F_{1} \left( \frac{m + n + 1}{2}, n + 1, \frac{a^{2}}{2} \right)  }{n!\,2^{\frac{n - m + 1}{2}}\,e^{\frac{a^{2}}{2}} }.
\end{equation}

\end{remark}

\noindent
The Nuttall $Q{-}$function is neither tabulated nor built-in in popular mathematical software packages like MAPLE, MATLAB and MATHEMATICA. Hence, the derived expressions are rather useful both analytically and numerically\footnote{Similar expressions for the $Q_{m,n}(a,b)$ function can be straightforwardly deduced with the aid of the identity $ \mathcal{Q}_{m,n}(a,b) = Q_{m,n}(a,b) {/}a^{n}$ i.e. by multiplying equations \eqref{Nuttall_Polynomial}, \eqref{Nuttall_Integer_Indices}, \eqref{Nuttall_Truncation}  and \eqref{Nuttall_Approximation_1} with $a^{n}$.}.

The behaviour of \eqref{Nuttall_Polynomial} is illustrated in Fig. $1 $ along with results obtained from numerical integrations. The series was truncated after $20$ terms and one can notice the excellent agreement between analytical and numerical results. This is also verified through the level of the involved absolute relative error $\epsilon_{r} \triangleq \mid \mathcal{Q}_{m,n}(a,b) - \tilde{\mathcal{Q}}_{m,n}(a,b) \mid {/} \mathcal{Q}_{m,n}(a,b) $ which is less than $\epsilon_{r} < 10^{-11}$. Similarly, the behaviour of \eqref{Nuttall_Novel_Bound} is depicted in Fig. $2$. Clearly, it upper bounds the $\mathcal{Q}_{m,n}(a,b)$ tightly while it becomes an accurate approximation for higher values of $a$.

\begin{figure}[h!]
\centering
\includegraphics[ width=12cm,height=9cm]{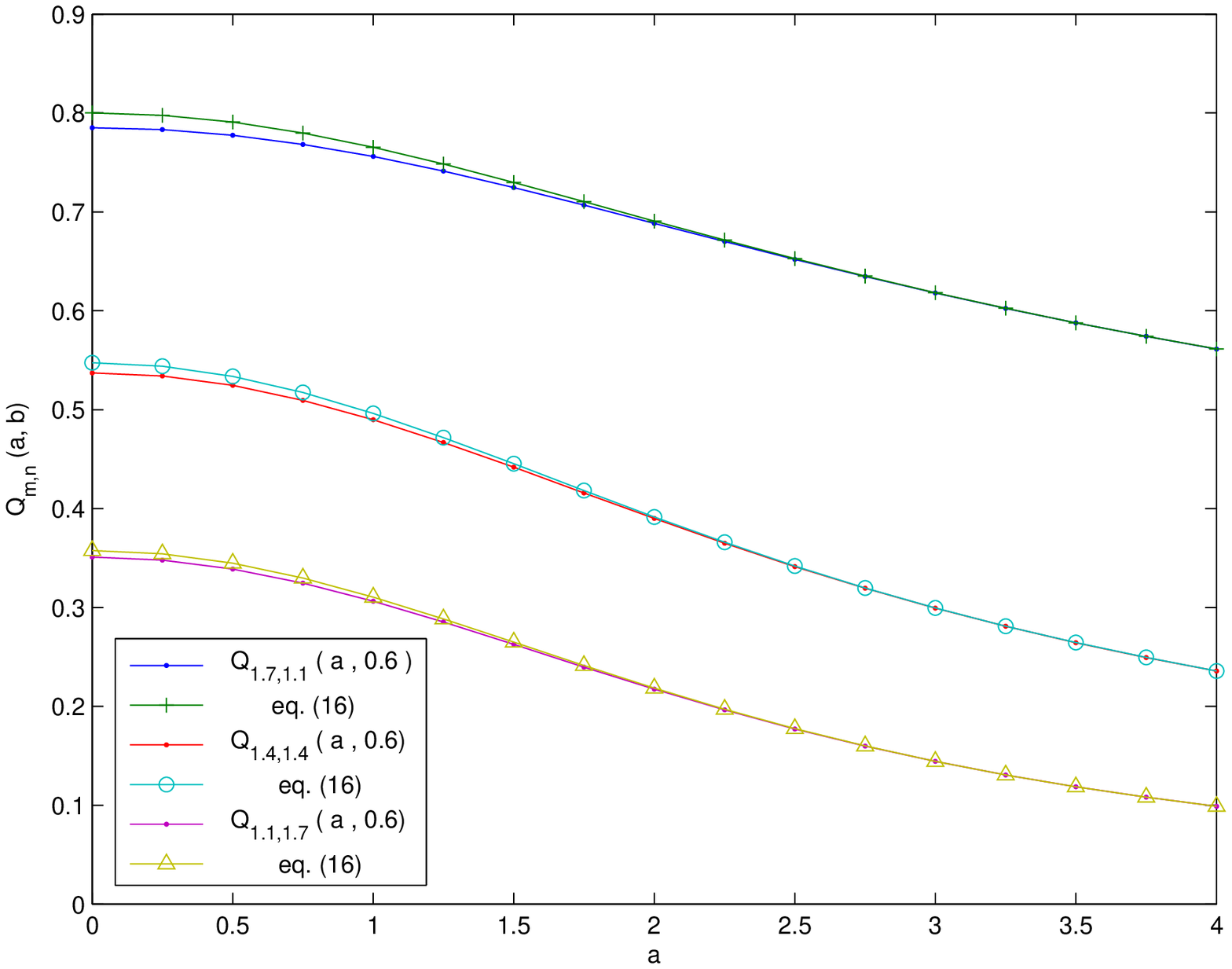}
\caption{ $Q_{m,n}(a,b)$ in \eqref{Nuttall_Novel_Bound}.}
\end{figure}

\section{Novel Analytic Results for the Incomplete Toronto Function }

\subsection{Definition and Basic Properties}

The ITF has been also useful in telecommunications and is defined as,
 
\begin{equation} \label{ITF_Definition}
T_{B}(m,n,r) \triangleq 2r^{n-m+1} e^{-r^{2}} \int_{0}^{B} t^{m-n} e^{-t^{2}}I_{n}(2rt) dt.
\end{equation}

\noindent
When $B = \infty$, the $T_{B}(m,n,r)$ function reduces to the $T(m, n, r)$ function, while for the specific case  $n = (m-1){/}2$ it is expressed in terms of the Marcum $Q$-function, namely,
 
\begin{equation} \label{ITF_Marcum}
T_{B}\left(m,\frac{m-1}{2},r\right) = 1 - Q_{\frac{m+1}{2}}\left(r\sqrt{2},B\sqrt{2}\right).
\end{equation}
\noindent
Alternative representations to the $T_{B}(a,b,r)$ function include two series which are infinite and no study has been reported on their convergence and truncation  \cite{J:Sagon}. In addition, the following polynomial representation was proposed in \cite{C:Sofotasios_1},
 
\begin{equation} \label{ITF_Series}
T_{B}(m,n,r) \simeq \sum_{k = 0}^{p} \frac{\Gamma(p + k) r^{ 2(n + k) - m + 1} \gamma\left( \frac{m + 1}{2} + k, B^{2} \right) }{k! p^{2k - 1}\Gamma(p - k + 1) \Gamma(n + k + 1) e^{r^{2}} }.
\end{equation}
which as $p \rightarrow \infty$ it reduces to,
 
\begin{equation} \label{ITF_Series}
T_{B}(m,n,r) = \sum_{k = 0}^{\infty} \frac{  r^{ 2(n + k) - m + 1} \gamma\left( \frac{m + 1}{2} + k, B^{2} \right) }{k!    \Gamma(n + k + 1) e^{r^{2}} }.
\end{equation}

\subsection{A Closed-Form Upper Bound for the Truncation Error}

A tight upper bound for the truncation error of \eqref{ITF_Series} can be derived in closed-form.

\begin{lemma}
For  $m, n, r \in \mathbb{R}$, $B \in \mathbb{R}^{+}$ and $m > n$ the following closed-form inequality holds,
 
\begin{equation} \label{ITF_Truncation_1}
\begin{split}
\epsilon_{t} &\leq \sum_{k = 0}^{\lfloor n \rfloor _{0.5} - \frac{1}{2}} \sum_{l = 0}^{L} \frac{ r^{-(2k +l)} \, \left(\lfloor n \rfloor _{0.5} + k - \frac{1}{2}\right)! \, (L - k)!  }{\, k! \, l!\left( \lfloor n \rfloor _{0.5} - k - \frac{1}{2} \right)! (L-k-l)!}\\
& \times \left\lbrace \frac{\gamma \left[ \frac{l + 1}{2}, (B + r)^{2} \right]}{(-1)^{\lceil m \rceil  - l} \,2^{2k+1}} + \frac{\gamma \left[ \frac{l + 1}{2}, (B- r)^{2} \right]}{(-1)^{k} \,  2^{2k+1}} \right\rbrace\\
& - \sum_{k = 0}^{p} \frac{\Gamma(p + k) r^{ 2(n + k) - m + 1} \gamma\left( \frac{m + 1}{2} + k, B^{2} \right) }{k! p^{2k - 1}\Gamma(p - k + 1) \Gamma(n + k + 1) e^{r^{2}} }.
\end{split}
\end{equation}
\end{lemma}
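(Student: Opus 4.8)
The plan is to mirror the argument of Lemma~1. First I would write the truncation error of \eqref{ITF_Series} as its tail,
\begin{equation*}
\epsilon_t = \sum_{k=p+1}^{\infty} \frac{\Gamma(p+k)\, r^{2(n+k)-m+1}\, \gamma\!\left(\tfrac{m+1}{2}+k,\, B^2\right)}{k!\, p^{2k-1}\, \Gamma(p-k+1)\, \Gamma(n+k+1)\, e^{r^2}},
\end{equation*}
and then re-express it as the full series (from $k=0$ to $\infty$) minus the partial sum from $k=0$ to $p$. By the same reasoning used for $\mathcal{I}_1$ in Lemma~1, the correction factors $\Gamma(p+k)\,p^{1-2k}/\Gamma(p-k+1)$ become negligible in the full series, which therefore reduces to the exact representation \eqref{ITF_Series}, i.e.\ to $T_B(m,n,r)$. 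Hence $\epsilon_t \le T_B(m,n,r) - \sum_{k=0}^{p}(\cdots)$, and the task reduces to bounding $T_B(m,n,r)$ in closed form.

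Second, I would use a monotonicity property of the incomplete Toronto function with respect to its second argument: for $m>n$ one has $T_B(m,n,r) \le T_B\!\left(m,\lfloor n\rfloor_{0.5},r\right)$, where $\lfloor n\rfloor_{0.5}$ denotes the largest half-odd-integer not exceeding $n$ (defined in analogy with $\lceil n\rceil_{0.5}$ in Lemma~1). This is the ITF counterpart of \cite[eq. (19)]{J:Karagiannidis} and follows either directly from the integral \eqref{ITF_Definition} --- monotonicity in $n$ of the normalized integrand built from $I_n(2rt)$ --- or termwise from \eqref{ITF_Series}. Substituting this inequality yields
\begin{equation*}
\epsilon_t \le T_B\!\left(m,\lfloor n\rfloor_{0.5},r\right) - \sum_{k=0}^{p} \frac{\Gamma(p+k)\, r^{2(n+k)-m+1}\, \gamma\!\left(\tfrac{m+1}{2}+k,\, B^2\right)}{k!\, p^{2k-1}\, \Gamma(p-k+1)\, \Gamma(n+k+1)\, e^{r^2}}.
\end{equation*}

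Third, I would evaluate $T_B\!\left(m,\lfloor n\rfloor_{0.5},r\right)$ in closed form. Since the second index is a half-odd-integer, $I_{\lfloor n\rfloor_{0.5}}(2rt)$ admits the finite elementary expansion \cite[eq. (8.467)]{B:Tables}, so \eqref{ITF_Definition} becomes a finite combination of integrals $\int_0^B t^{\,j}\,e^{-(t\mp r)^2}\,dt$ after completing the square via $-t^2\pm 2rt = -(t\mp r)^2 \pm r^2$ (the factor $e^{\pm r^2}$ cancelling the $e^{-r^2}$ in front). Writing $t^{\,j} = \big((t\mp r)\pm r\big)^{\,j}$, expanding binomially, and substituting $u=t\mp r$ then turns each integral into lower incomplete gamma functions $\gamma\!\left(\tfrac{l+1}{2},(B\pm r)^2\right)$, with the parity/sign bookkeeping producing the factors $(-1)^{\lceil m\rceil - l}$, $(-1)^k$ and the ratios of factorials. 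Collecting terms gives precisely the double sum in \eqref{ITF_Truncation_1}, completing the proof. (This closed form for half-odd-integer second index is the ITF analogue of \cite[Corollary~1]{J:Karagiannidis} and was reported in \cite{B:Sofotasios, C:Sofotasios_1}, so it may simply be cited.)

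The main obstacle is this last step: carrying out the finite-Bessel expansion together with the two completions of the square so that the integration limits map exactly to $(B+r)^2$ and $(B-r)^2$, and tracking the signs so that every $(-1)^{\lceil m\rceil - l}$ and $(-1)^k$ lands in the right place. The monotonicity inequality of the second step, though conceptually central, can be taken over from the existing literature exactly as in Lemma~1.
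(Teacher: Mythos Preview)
Your outline follows the paper's strategy almost exactly, but there is one genuine gap: you round only the second index, bounding $T_B(m,n,r)$ by $T_B\!\left(m,\lfloor n\rfloor_{0.5},r\right)$, whereas the paper bounds by $T_B\!\left(\lceil m\rceil,\lfloor n\rfloor_{0.5},r\right)$, rounding the first index as well. This is not cosmetic. In your third step you expand $I_{\lfloor n\rfloor_{0.5}}(2rt)$ via \cite[eq.~(8.467)]{B:Tables}, which produces terms of the form $(2rt)^{-1/2-k}e^{\pm 2rt}$; combined with the prefactor $t^{\,m-\lfloor n\rfloor_{0.5}}$ from \eqref{ITF_Definition}, the resulting power of $t$ is $t^{\,m-\text{integer}}$. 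Your subsequent manoeuvre ``write $t^{\,j}=\big((t\mp r)\pm r\big)^{\,j}$ and expand binomially'' yields a \emph{finite} sum only when $j$ is a non-negative integer, i.e.\ only when $m\in\mathbb{Z}$. For general $m\in\mathbb{R}$ the expansion is an infinite series and you do not obtain a closed form. This is precisely why the paper first passes to $\lceil m\rceil$ (invoking monotonicity of $T_B$ in its first argument in addition to the monotonicity in $n$ that you cite), and it is also why the factor $(-1)^{\lceil m\rceil-l}$ appears in \eqref{ITF_Truncation_1}: in your write-up that exponent has no origin, since you never replaced $m$ by $\lceil m\rceil$.

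The fix is simply to insert the additional monotonicity step $T_B(m,n,r)\le T_B(\lceil m\rceil,\lfloor n\rfloor_{0.5},r)$ before invoking the closed-form expression of \cite{C:Sofotasios_1}; once both indices are rounded (one up to an integer, the other down to a half-odd-integer), your third step goes through as written and reproduces the double sum in \eqref{ITF_Truncation_1}. Apart from this omission, your argument coincides with the paper's.
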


\begin{proof}
Since the corresponding truncation error is expressed as 

\begin{equation}
\epsilon_{t} = \sum_{p + 1}^{\infty} f(x) = \sum_{l = 0}^{\infty} f(x) - \sum_{l}^{p} f(x)
\end{equation}

and given that \eqref{ITF_Series} reduces to an exact infinite series as $p \rightarrow \infty$, it follows that,
 
\begin{equation} \label{ITF_Truncation_2}
\epsilon_{t}  = \underbrace{\sum_{k = 0}^{\infty} \frac{r^{ 2(n + k) - m + 1} \gamma\left( \frac{m + 1}{2} + k, B^{2} \right) }{k!  \Gamma(n + k + 1) e^{r^{2}} } }_{\mathcal{I}_{3}} - \sum_{k = 0}^{p} \frac{\Gamma(p + k) r^{ 2(n + k) - m + 1} \gamma\left( \frac{m + 1}{2} + k, B^{2} \right) }{k! p^{2k - 1}\Gamma(p - k + 1) \Gamma(n + k + 1) e^{r^{2}} }.
\end{equation}

\noindent
Given that $\mathcal{I}_{3} = T_{B}(m, n, r)$,   the $\epsilon_{t}$ can be upper bounded as follows:
 
\begin{equation} \label{ITF_Truncation_3}
\epsilon_{t}   \leq T_{B}(\lceil m \rceil, \lfloor n \rfloor_{0.5}, r) - \sum_{k = 0}^{p} \frac{\Gamma(p + k) r^{ 2(n + k) - m + 1} \gamma\left( \frac{m + 1}{2} + k, B^{2} \right) }{k! p^{2k - 1}\Gamma(p - k + 1) \Gamma(n + k + 1) e^{r^{2}} }.
\end{equation}

\noindent
The $ T_{B}(\lceil m \rceil, \lfloor n \rfloor_{0.5}, r)$ function can be expressed in closed-form according to the closed-form expression in \cite{C:Sofotasios_1}. Hence, by substituting in \eqref{ITF_Truncation_3} equation \eqref{ITF_Truncation_1} is obtained, which completes the proof.
\end{proof}

\begin{remark}
By omitting the terms $\Gamma(p + k) p^{1 - 2k} {/} \Gamma(p - k + 1)$ in \eqref{ITF_Truncation_1}, a closed-form upper bound can be  deduced for the truncation error of the exact infinite series in Remark $2$.

\end{remark}

\subsection{A Tight Closed-form Upper Bound and Approximation}

Capitalizing on the algebraic form of the $T_{B}(m, n, r)$ function, a simple closed-form upper bound is derived which in certain cases becomes an accurate approximation.

\begin{proposition}
For $m, n, r \in \mathbb{R}$, $B \in \mathbb{R}^{+}$ and $m, n, r \leq \frac{B}{2}$, the following inequality holds,
 
\begin{equation} \label{ITF_Appr_1}
T_{B}(m, n, r) \leq  \frac{ \Gamma\left( \frac{m + 1}{2} \right)  \, _{1}F_{1} \left( \frac{m + 1}{2}, n + 1, r^{2} \right)    }{  r^{m - 2n - 1  }     \Gamma(n + 1) e^{r^{2}} }.
\end{equation}
\end{proposition}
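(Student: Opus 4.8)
The plan is to reproduce, for the incomplete Toronto function, the argument used in the proof of Proposition~1. I would start from the exact infinite series \eqref{ITF_Series}, which holds for all admissible parameters, and exploit the elementary fact that the lower incomplete gamma function is majorised by the complete one: for every $x \in \mathbb{R}^{+}$,
\begin{equation*}
\gamma(a,x) = \int_{0}^{x} t^{a-1} e^{-t}\, dt \leq \int_{0}^{\infty} t^{a-1} e^{-t}\, dt = \Gamma(a).
\end{equation*}
Applying this term by term with $a = \frac{m+1}{2} + k$ and $x = B^{2}$ — which is legitimate since every summand is nonnegative — gives
\begin{equation*}
T_{B}(m, n, r) \leq \sum_{k = 0}^{\infty} \frac{r^{2(n + k) - m + 1}\, \Gamma\!\left( \frac{m + 1}{2} + k \right)}{k!\, \Gamma(n + k + 1)\, e^{r^{2}}}.
\end{equation*}

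Next I would factor out the $k$-independent quantities, writing $r^{2(n+k)-m+1} = r^{2n-m+1} (r^{2})^{k}$, and rewrite the two gamma functions through the Pochhammer symbol via $\Gamma(x+k) = (x)_{k}\, \Gamma(x)$, taking $x = \frac{m+1}{2}$ in the numerator and $x = n+1$ in the denominator (so that $\Gamma(n+1) = n!$). The residual series is then
\begin{equation*}
\sum_{k = 0}^{\infty} \frac{\left( \frac{m+1}{2} \right)_{k}}{(n+1)_{k}} \frac{(r^{2})^{k}}{k!},
\end{equation*}
which is precisely the defining series of $\,_{1}F_{1}\!\left( \frac{m+1}{2}, n+1, r^{2} \right)$. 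Reassembling the prefactor $\Gamma\!\left( \frac{m+1}{2} \right) r^{2n-m+1}/\bigl(\Gamma(n+1)\, e^{r^{2}}\bigr)$ and using $r^{2n-m+1} = r^{-(m-2n-1)}$ yields exactly the right-hand side of \eqref{ITF_Appr_1}, which completes the argument.

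I do not expect a genuine obstacle here: the bound $\gamma \leq \Gamma$ is unconditional, the majorising series converges for every $r$ because $\,_{1}F_{1}$ is entire, so no interchange-of-limit issue arises, and the only delicate point is the bookkeeping of the exponent of $r$ and the index shift in the Pochhammer identities. What the hypothesis $m, n, r \leq \frac{B}{2}$ supplies is not validity but \emph{tightness}: when $B^{2}$ dominates the effective order $\frac{m+1}{2}+k$ of the significant terms, the ratio $\gamma\!\left( \frac{m+1}{2}+k, B^{2} \right)/\Gamma\!\left( \frac{m+1}{2}+k \right)$ is close to unity, so the single inequality invoked above is essentially lossless and the bound becomes an accurate closed-form approximation in that regime. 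The one substantive ingredient, exactly as in Proposition~1, is recognising the power series obtained after the Pochhammer rewriting as a Kummer confluent hypergeometric function.
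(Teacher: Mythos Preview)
Your proposal is correct and follows essentially the same route as the paper: bound $\gamma(\cdot,B^{2})$ by $\Gamma(\cdot)$ in the exact series \eqref{ITF_Series}, convert the gamma functions to Pochhammer symbols, and identify the resulting series as $\,_{1}F_{1}$. Your remark that the hypothesis $m,n,r \leq B/2$ governs tightness rather than validity is also in line with the paper's Remark~4.
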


\begin{proof}
It is recalled that the $\gamma(a,x)$ function can be upper bounded by the $\Gamma(a)$ function since 

 \begin{equation}
 \Gamma(a) = \int_{0}^{\infty} t^{a-1}{\rm exp}(-t) dt = \int_{0}^{x = \infty} t^{a-1}{\rm exp}(-t) dt = \gamma(a, x = \infty)
  \end{equation}
   
     Therefore, the $T_{B}(m,n,r) $ function can be upper bounded as,
 
\begin{equation} \label{ITF_Appr_2}
T_{B}(m,n,r) \leq  \sum_{k = 0}^{\infty} \frac{r^{ 2(n + k) - m + 1} \Gamma\left( \frac{m + 1}{2} + k \right) }{k! \Gamma(n + k + 1) e^{r^{2}} },
\end{equation}

\noindent
which with the aid of the identity

\begin{equation}
\Gamma(a,n) = (a)_{n}\Gamma(a)
\end{equation}

can be expressed as,
 
\begin{equation} \label{ITF_Appr_3}
T_{B}(m,n,r) \leq   \frac{ r^{2n - m + 1}  \Gamma \left(\frac{m + 1}{2} \right)}{\Gamma(n + 1) \,e^{r^{2}}} \sum_{k = 0}^{\infty} \frac{\left( \frac{m + 1}{2} \right)_{k} }{(n + 1)_{k}  } \frac{r^{2k}}{k!}.
\end{equation}

\begin{figure}[h!]
\centering 
\includegraphics[ width=15cm,height=12cm]{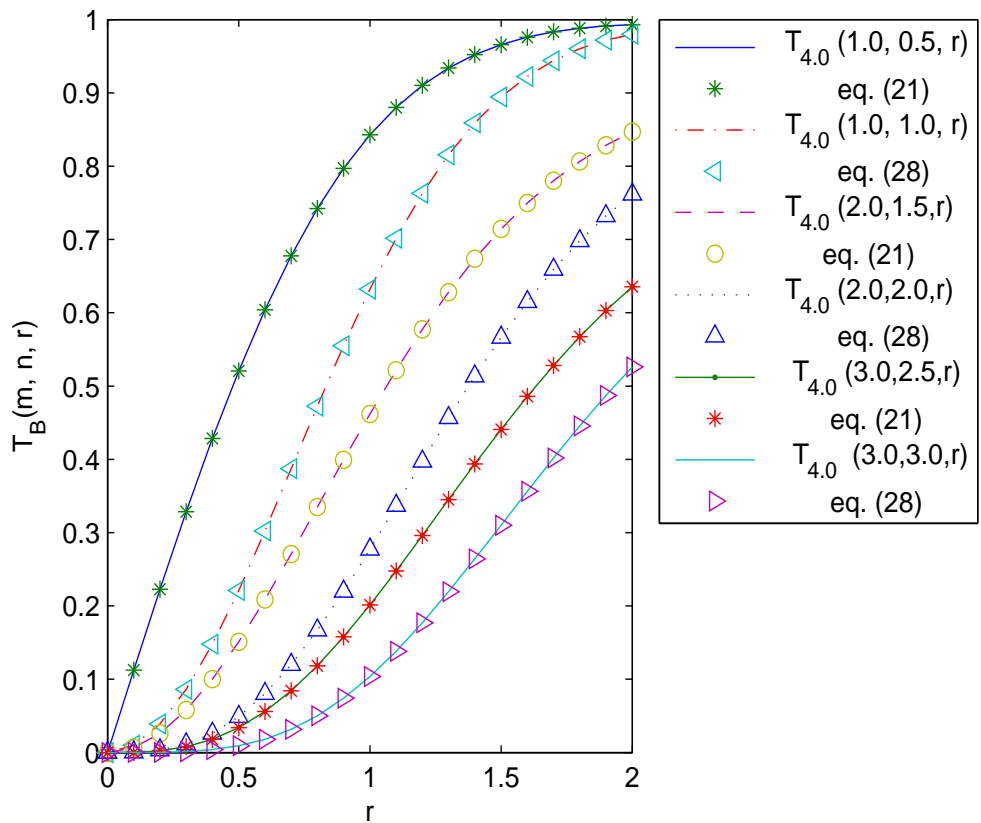}
\caption{$T_{B}(m,n,r)$ in (23) }
\end{figure}

\noindent
Importantly, the above series can be expressed in closed-form in terms of the Kummer's confluent hypergeometric function. Thus, equation \eqref{ITF_Appr_1} is deduced and the proof is completed.
\end{proof}

\begin{remark}
The proof can be also completed by assuming $B \rightarrow \infty$ in \eqref{ITF_Definition}  and utilizing \cite[eq. (8.406.3)]{B:Tables} and \cite[eq. (6.631.1)]{B:Tables}. The incomplete Toronto function reduces then to the Toronto function, as the two functions are related by the identity $T_{B = \infty}(m, n, r) = T(m, n, r)$. Furthermore \eqref{ITF_Appr_1}  becomes an accurate approximation when $m, n, r \leq 2B$, namely, 
 
\begin{equation} \label{ITF_Appr_5}
T_{B}(m, n, r) \simeq  \frac{ \Gamma\left( \frac{m + 1}{2} \right)  \, _{1}F_{1} \left( \frac{m + 1}{2}, n + 1, r^{2} \right)    }{  r^{m - 2n - 1  }     \Gamma(n + 1) e^{r^{2}} }.
\end{equation}

\end{remark}

Equation   \eqref{ITF_Series} is depicted in Fig.$3{\rm }$ along with results from corresponding numerical integrations.The agreement between analytical and numerical results is excellent and the relative error for \eqref{ITF_Series} is $\epsilon_{r} < 10^{-4}$ for truncation after $20$ terms. In the same context  (28) is depicted in Fig.$4$ along with numerical results for three different scenarios. The involved relative error is proportional to the value of $r$ and is $\epsilon_{r} < 10^{-6}$ when $r < 1$.

\begin{figure}[h!]
\centering 
\includegraphics[ width=12cm,height=9cm]{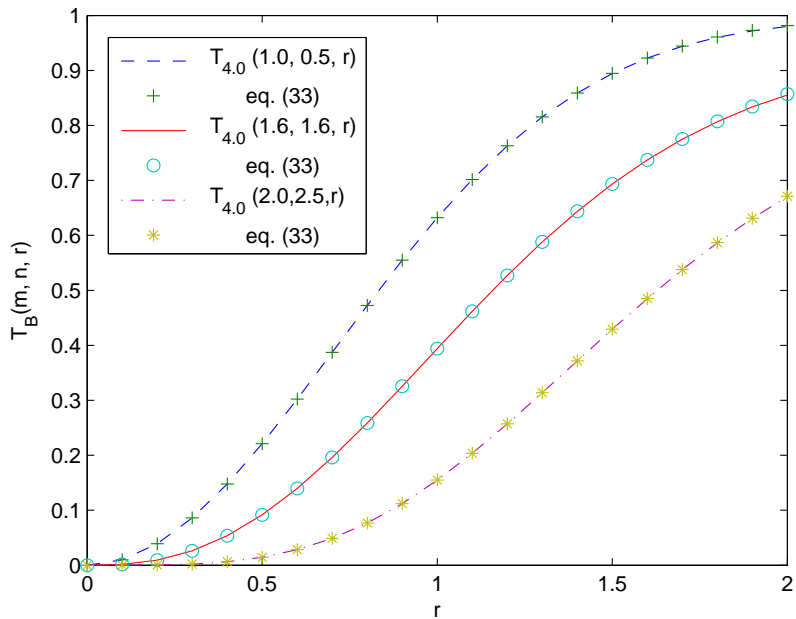}
\caption{$T_{B}(m,n,r)$ in  (28) }
\end{figure}

To the best of the Authors' knowledge, the offered results have not been reported in the open technical literature and can be particularly useful in emerging wireless technologies such as cognitive radio, cooperative communications, MIMO systems, digital communications over fading channels, ultasound and free-space-optical communications, among others \cite{B:Sofotasios, C:Sofotasios_1, C:Sofotasios_2, C:Sofotasios_3, J:Sofotasios, P:Paschalis_1, P:Paschalis_2, P:Paschalis_3, J:Theo, P:Paschalis_4,  P:Paschalis_5, Paschalis_4, Paschalis_6, Paschalis_7, Paschalis_8, Paschalis_9, Paschalis_10, Paschalis_11, Paschalis_12, Paschalis_13, Paschalis_14, Paschalis_15, Paschalis_16, Paschalis_17}, and the references therein.

\section{Conclusion}

Novel  upper bounds were derived for the truncation error of representations for the Nuttall $Q$-function and the incomplete Toronto function. These expressions are given in closed-form and are have a tractable algebraic form. Since the Nuttall$-Q$ and incomplete toronto functions are not included as built-in functions in popular mathematical software packages, the offered results are useful in computing these functions efficiently. As a result, the accurate computation of critical performance measures in digital communications that involve these functions becomes more feasible.

\bibliographystyle{IEEEtran}
\thebibliography{99}

\bibitem{J:Marcum_1}  
J. I. Marcum, 
``A statistical theory of target detection by pulsed radar: Mathematical appendix", \emph{RAND Corp.}, Santa Monica, Research memorandum, CA, 1948.

\bibitem{J:Marcum_2} 
J. I. Marcum, 
``Table of Q-functions, U.S. Air Force Project RAND Res. Memo. M-339, ASTIA document AD 116545", \emph{RAND Corp.}, Santa Monica, CA, 1950.

\bibitem{J:Nuttall} 
A. H. Nuttall, 
``Some integrals involving the Q-Function", \emph{Naval underwater systems center}, New London Lab, New London, CT, 1972.

\bibitem{B:Proakis} 
J. G. Proakis,
``Digital Communications", \emph{3rd ed. McGraw-Hill}, 1995. 

\bibitem{J:Helstrom} 
C. W. Helstrom,
``Computing the generalized Marcum Q -function", \emph{Trans. on Inf. Theory}, vol. 38, 4, pp. 1422-1428, Jul. 1992.

\bibitem{J:Shnidman} 
D. A. Shnidman,
``The calculation of the probability of detection and the generalized Marcum Q -function", \emph{IEEE Trans. on Inf. Theory}, vol. 35, Issue 2, pp. 389-400, Mar. 1989. 

\bibitem{J:Simon_1} 
M.K. Simon, 
``A new twist on the Marcum Q-function and its application", \emph{IEEE Comm. Lett.}, vol. 2, no. 2, pp. 39-41, Feb. 1998.

\bibitem{B:Alouini} 
M.K. Simon, M.S. Alouini,
``Exponential-Type Bounds on the Generalized Marcum Q-Function with Application to Error Probability over Fading Channels", \emph{IEEE Trans. Commun.}, vol. 48, no. 3, pp. 359-366, Mar. 2000.

\bibitem{J:Karagiannidis} 
V. M. Kapinas, S. K. Mihos, and G. K. Karagiannidis,
``On the monotonicity of the generalized Marcum and Nuttall Q-functions", \emph{IEEE Trans. Inf. Theory}, vol. 55, no. 8, pp. 3701-3710, Aug. 2009.

\bibitem{J:Baricz} 
A. Baricz, Y. Sun,
``New Bounds for the Generalized Marcum Q-Function", \emph{IEEE Trans. Inf. Theory}, vol. 55, no. 7, pp. 3091-3100, July 2008.

\bibitem{C:Tellambura_1} 
A. Annamalai, C. Tellambura, J. Matyjas,
``A new twist on the generalized Marcum Q-function $Q_{M} (a, b)$ with fractional-order M and its applications", \emph{in $6^{th}$ CCVC}, 2009, pp. 1-5, 2009.

\bibitem{C:Kam_1} 
R. Li and P. Y. Kam, 
``Computing and bounding the generalized Marcum Q-function via a geometric approach", \emph{in ISIT 2006}, Seattle, USA, pp. 1090-1094, July 2006. 

\bibitem{J:Simon_2} 
M.K. Simon, 
``The Nuttall Q-function-its relation to the Marcum Q-function and its application in digital communication performance evaluation", \emph{IEEE Trans. Commun.}, vol. 50, no. 11, pp. 1712-1715, Nov. 2002.

\bibitem{J:Hatley} 
A. H. Heatley,
``A short table of the Toronto functions, \emph{Trans. Roy. Soc. (Canada)}, vol. 37, sec. III. 1943.

\bibitem{J:Sagon} 
H. Sagon,
``Numerical calculation of the incomplete Toronto function", \emph{Proceedings of the IEEE}, vol. 54, Issue 8, pp. 1095 - 1095, Aug. 1966.

\bibitem{J:Fisher} 
R. A. Fisher,
``The general sampling distribution of the multiple correlation coefficient", \emph{Proc. Roy. Soc. (London)}, Dec. 1928.

\bibitem{J:Marcum_3}  
J. I. Marcum,
``A statistical theory of target detection by pulsed radar", \emph{IRE Trans. on Inf. Theory}, vol. IT-6, pp. 59-267, April 1960.

\bibitem{J:Swerling} 
P. Swerling, 
``Probability of detection for fluctuating targets", \emph{IRE Trans. on Inf. Theory}, vol. IT-6, pp. 269 - 308, April 1960.

\bibitem{J:Rice_1}  
S. O. Rice,
``Statistical properties of a sine wave plus random noise", \emph{Bell Syst. Tech.} J., 27, pp. 109-157, 1948.

\bibitem{B:Tables} 
I. S. Gradshteyn and I. M. Ryzhik, 
``Table of Integrals, Series, and Products", \emph{in $7^{th}$ ed.  Academic}, New York, 2007.

\bibitem{J:Rice_2} 
R. F. Pawula, S. O. Rice and J. H. Roberts,
``Distribution of the phase angle between two vectors perturbed by Gaussian noise", \emph{IEEE Trans. Commun}. vol. COM-30, pp. 1828-1841, Aug. 1982.

\bibitem{J:Tan} 
B. T. Tan, T. T. Tjhung, C. H. Teo and P. Y. Leong,
``Series representations for Rice's $Ie$ function", \emph{IEEE Trans. Commun}. vol. COM-32, no. 12, Dec. 1984.

\bibitem{J:Pawula} 
R. F. Pawula,
``Relations between the Rice $Ie$-function and the Marcum $Q$-function with applications to error rate calculations", \emph{Elect. Lett}. vol. 31, no. 24, pp. 2078-2080, Nov. 1995.

\bibitem{B:Roberts} 
J. H. Roberts,
``Angle Modulation", \emph{Peregrinus}, Stevenage, UK, 1977.

\bibitem{B:Abramowitz} 
M. Abramowitz and I. A. Stegun, 
``Handbook of Mathematical Functions With Formulas, Graphs, and Mathematical Tables", \emph{Dover}, New York, 1974.

\bibitem{B:Maksimov} 
M. M. Agrest and M. Z. Maksimov,
``Theory of incomplete cylindrical functions and their applications", \emph{Springer-Verlag, New York}, 1971.

\bibitem{J:Dvorak} 
S. L. Dvorak,
``Applications for incomplete Lipschitz-Hankel integrals in electromagnetics", \emph{IEEE Antennas Prop. Mag.} vol. 36, no. 6, pp. 26-32, Dec. 1994.

\bibitem{J:Paris} 
J. F. Paris, E. Martos-Naya, U. Fernandez-Plazaola and J. Lopez-Fernandez
``Analysis of Adaptive MIMO transmit beamforming under channel prediction errors based on incomplete Lipschitz-Hankel integrals", \emph{IEEE Trans. Veh. Tech.}, vol. 58, no. 6, July 2009.

\bibitem{J:Gross_1} 
F. B. Gross,
``New approximations to $J_{0}$ and $J_{1}$ Bessel functions", \emph{IEEE Trans. Ant. Propag.}, vol. 43, no. 8, pp. 904-907, Aug. 1995.

\bibitem{J:Gross_2} 
L- L. Li, F. Li and F. B. Gross,
``A new polynomial approximation for $J_{m}$ Bessel functions", \emph{Elsevier Journal of Applied Mathematics and Computation}, vol. 183, pp. 1220-1225, 2006.

\bibitem{B:Sofotasios} 
P. C. Sofotasios,
``On Special Functions and Composite Statistical Distributions and Their Applications in Digital Communications over Fading Channels", \emph{Ph.D. Dissertation}, University of Leeds, UK, 2010. 

\bibitem{C:Sofotasios_1}
P. C. Sofotasios, S. Freear, ``Novel Results for the Incomplete Toronto Function and Incomplete Lipschitz-Hankel Integrals", \emph{IEEE International Microwave and Optoelectronics Conference (IMOC '11)}, Natal, Brazil, Oct. 2011.

\bibitem{C:Sofotasios_2}
P. C. Sofotasios, S. Freear, ``Novel Expressions for the Marcum and One Dimensional $Q{-}$Functions", \emph{Seventh International Symposium on Wireless Communication Systems (7th ISWCS '10)}, York, UK, Sep. 2010.

\bibitem{C:Sofotasios_3}
P. C. Sofotasios, S. Freear, "A Novel Representation for the Nuttall $Q{-}$Function", \emph{IEEE International Conference in Wireless Information Technology and Systems (ICWITS '10)}, Honolulu, HI, USA, Aug. 2010.

\bibitem{J:Sofotasios} 
P. C. Sofotasios, E. Rebeiz, L. Zhang, T. A. Tsiftsis, D. Cabric and S. Freear, 
``Energy Detection-Based Spectrum Sensing over $\kappa{-}\mu$ and $\kappa{-}\mu$ Extreme Fading Channels", 
\emph{IEEE Trans. Veh. Techn.}, vol. 63, no 3, pp. 1031${-}$1040, Mar. 2013.

\bibitem{P:Paschalis_1}
K. Ho-Van, P. C. Sofotasios,
``Bit Error Rate of Underlay Multi-hop Cognitive Networks in the Presence of Multipath Fading", 
\emph{in IEEE International Conference on Ubiquitous and Future Networks (ICUFN '13)}, pp. 620 - 624, Da Nang, Vietnam, July 2013. 

\bibitem{P:Paschalis_2}
K. Ho-Van, P. C. Sofotasios,
``Outage Behaviour of Cooperative Underlay Cognitive Networks with Inaccurate Channel Estimation", 
\emph{ in IEEE International Conference on Ubiquitous and Future Networks (ICUFN '13)}, pp. 501-505, Da Nang, Vietnam, July 2013.

\bibitem{P:Paschalis_3}
K. Ho-Van, P. C. Sofotasios, 
``Exact BER Analysis of Underlay Decode-and-Forward Multi-hop Cognitive Networks with Estimation Errors", \emph{IET Communications}, vol. 7, no. 18, pp. 2122-2132, Dec. 2013. 

\bibitem{J:Theo}
F. R. V Guimaraes, D. B. da Costa, T. A. Tsiftsis, C. C. Cavalcante,  and G. K. Karagiannidis, 
``Multi-User and Multi-Relay Cognitive Radio Networks Under Spectrum Sharing Constraints," 
\emph{IEEE Transactions on Vehicular Technology}, accepted for publication.

\bibitem{P:Paschalis_4}
K. Ho-Van, P. C. Sofotasios, S. V. Que, T. D. Anh, T. P. Quang, L. P. Hong, 
``Analytic Performance Evaluation of Underlay Relay Cognitive Networks with Channel Estimation Errors", 
\emph{  in IEEE International Conference on Advanced Technologies for Communications (ATC '13)}, HoChiMinh City, Vietnam, Oct. 2013.

\bibitem{P:Paschalis_5}
K. Ho-Van, P. C. Sofotasios, S. Freear, 
``Underlay Cooperative Cognitive Networks with Imperfect Nakagami-$m$ Fading Channel Information and Strict Transmit Power Constraint: Interference Statistics and Outage Probability Analysis", \emph{IEEE/KICS Journal of Communications and Networks},  vol. 16, no. 1, pp. 10-17, Feb. 2014. 

\bibitem{Paschalis_4}
K. Ho-Van, P. C. Sofotasios, S. V. Que, T. D. Anh, T. P. Quang, L. P. Hong, 
``Analytic Performance Evaluation of Underlay Relay Cognitive Networks with Channel Estimation Errors", 
\emph{in IEEE International Conference on Advanced Technologies for Communications (ATC '13)}, HoChiMinh City, Vietnam, Oct. 2013.

\bibitem{Paschalis_6}
G. C. Alexandropoulos, P. C. Sofotasios, K. Ho-Van, S. Freear, 
``Symbol error probability of DF relay selection over arbitrary Nakagami-m fading channels", 
\emph{Journal of Engineering (Invited Paper)},  Article ID 325045, 2013. 

\bibitem{Paschalis_7}
M. K. Fikadu, P. C. Sofotasios, Q. Cui, M. Valkama, 
``Energy-Optimized Cooperative Relay Network over Nakagami${-}m$ Fading Channels", 
\emph{  in IEEE International Conference on Wireless and Mobile Computing, Networking and Communications (WiMob '13)}, Oct. 2013.

\bibitem{Paschalis_8}
P. C. Sofotasios, T. A. Tsiftsis, K. Ho-Van, S. Freear, L. R. Wilhelmsson, M. Valkama, 
``The $\kappa{-}\mu {/}$Inverse-Gaussian Composite Statistical Distribution in RF and FSO Wireless Channels", 
\emph{  in IEEE Vehicular Technology Conference (VTC '13 - Fall)}, Las Vegas, USA, Sep. 2013.

\bibitem{Paschalis_9}
A. Gokceoglu, Y. Zou, M. Valkama, P. C. Sofotasios, P. Mathecken, D. Cabric, 
``Mutual Information Analysis of OFDM Radio Link under Phase Noise, IQ Imbalance and Frequency-Selective Fading Channel", 
\emph{IEEE Transactions on Wireless Communications}, vol.12, no.6, pp.3048 - 3059, June 2013.

\bibitem{Paschalis_10}
P. C. Sofotasios, T. A. Tsiftsis, M. Ghogho, L. R. Wilhelmsson and M. Valkama, 
``The $\eta{-}\mu {/}$Inverse-Gaussian Distribution: A Novel Physical Multipath/Shadowing Fading", 
\emph{in IEEE International Conference on Communications (ICC' 13)}, Budapest, Hungary, June 2013.

\bibitem{Paschalis_11}
G. C. Alexandropoulos, A. Papadogiannis, P. C. Sofotasios, 
``A Comparative Study of Relaying Schemes with Decode-and-Forward over Nakagami${-}m$ Fading Channels", \emph{Journal of Computer Networks and Communications, (Invited Paper)}, vol. 2011, Article ID 560528, 14 pages, Dec. 2011.

\bibitem{Paschalis_12}
P. C. Sofotasios, S. Freear, 
``The $\alpha{-}\kappa {-} \mu{/}$gamma Composite Distribution: A Generalized Non-Linear Multipath/Shadowing Fading Model", 
\emph{IEEE INDICON '11}, Hyderabad, India, Dec. 2011.

\bibitem{Paschalis_13}
P. C. Sofotasios, S. Freear, 
``The $\eta{-}\mu{/}$gamma and the $\lambda{-}\mu{/}$gamma Multipath${/}$Shadowing Distributions", 
\emph{Australasian Telecommunication Networks And Applications Conference (ATNAC '11)}, Melbourne, Australia, Nov. 2011.

\bibitem{Paschalis_14}
S. Harput, P. C. Sofotasios, S. Freear, 
``A Novel Composite Statistical Model For Ultrasound Applications", \emph{IEEE International Ultrasonics Symposium (IUS '11)}, pp. 1387 - 1390, Orlando, FL, USA, Oct. 2011. 

\bibitem{Paschalis_15}
P. C. Sofotasios, S. Freear, 
``The $\kappa{-}\mu{/}$gamma Extreme Composite Distribution: A Physical Composite Fading Model", 
\emph{IEEE Wireless Communications and Networking Conference (WCNC '11)}, pp. 1398 - 1401, Cancun, Mexico, Mar. 2011.

\bibitem{Paschalis_16}
P. C. Sofotasios, S. Freear, 
``The  $\kappa{-}\mu{/}$gamma Composite Fading Model", 
\emph{IEEE International Conference in Wireless Information Technology and Systems (ICWITS '10)}, Honolulu, HI, USA, Aug. 2010.

\bibitem{Paschalis_17}
P. C. Sofotasios, S. Freear, 
``The  $\eta{-}\mu {/}$gamma Composite Fading Model", \emph{IEEE International Conference in Wireless Information Technology and Systems (ICWITS '10)}, Honolulu, HI, USA, Aug. 2010.

\bibitem{B:Prudnikov} 
A. P. Prudnikov, Y. A. Brychkov, and O. I. Marichev, 
``Integrals and Series", \emph{3rd ed. New York}: Gordon and Breach Science, vol. 1, Elementary Functions, 1992.

\end{document}